\documentclass[copyright,creativecommons]{eptcs}
\usepackage{breakurl}             

\usepackage{enumerate,amsthm,amsmath,amssymb,color,graphicx,bbm}
\usepackage[square,sort,comma,numbers]{natbib} 
\usepackage[all]{xy}
\usepackage{tikz}
\usepackage{subfigure}

\usetikzlibrary{calc,decorations.pathreplacing}
\usetikzlibrary{arrows,shapes}

\usepackage{hyperref}

\newtheorem{theor}{Theorem}

\newcommand{\be}{\begin{equation}}
\newcommand{\ee}{\end{equation}}

\DeclareMathAlphabet{\mathpzc}{OT1}{pzc}{m}{it}

\newcommand{\NS}{\mathpzc{NS}}
\newcommand{\G}{\mathpzc{G}}
\newcommand{\Q}{\mathpzc{Q}}
\newcommand{\CE}{\mathpzc{CE}}
\newcommand{\C}{\mathpzc{C}}
\newcommand{\HS}{\mathpzc{H}}
\newcommand{\B}{\mathpzc{B}}
\newcommand{\LO}{\mathpzc{LO}}

\title{Probabilistic models on contextuality scenarios}
\author{Tobias Fritz
\institute{Perimeter Institute for Theoretical Physics, Waterloo, Ontario, Canada}
\email{tfritz@perimeterinstitute.ca}
\and
Anthony Leverrier
\institute{INRIA Rocquencourt, Domaine de Voluceau, B.P. 105, 78153 Le Chesnay Cedex, France}
\email{anthony.leverrier@inria.fr}
\and
Ana Bel{\'e}n Sainz
\institute{ICFO--Institut de Ciencies Fotoniques, E--08860 Castelldefels, Barcelona, Spain}\email{belen.sainz@icfo.es}
}

\begin{document}
\maketitle

\begin{abstract}
We introduce a framework to describe probabilistic models in Bell experiments, and more generally in \emph{contextuality scenarios}. Such a scenario is a hypergraph whose vertices represent elementary events and hyperedges correspond to measurements. A probabilistic model on such a scenario associates to each event a probability, in such a way that events in a given measurement have a total probability equal to one. We discuss the advantages of this framework, like the unification of the notions of contexuality and nonlocality, and give a short overview of the results obtained in Ref.~\cite{FLS12}.
\end{abstract}

The main goal of physics is to understand how Nature works, and usually, physicists proceed as follows: first, observe a phenomenon, then propose a model that explains it, extract predictions from this model, and, finally, confront these predictions with experimental data. Repeat until the experimental results match the theoretical predictions. In some situations, however, it can be fruitful to limit the model to a minimum. This idea was recently investigated in the paradigm of \emph{device-independence} \cite{BCP13}. 
There, an experimenter has access to a physical device with classical commands $x \in \mathcal{X}$ and classical results $a \in \mathcal{A}$ and chooses not to model the inner workings of the device any further. This might seem futile at first sight: how can one hope to say anything meaningful when only observing conditional probabilities of the form $P(a|x)$, corresponding to the probability of obtaining outcome $a$ when applying command (or measurement) $x$?
The key idea is to consider $n$ physical devices used in a space-like separated way by $n$ experimenters. Then, one has access to the conditional probability distribution $P(a_1 \ldots a_n|x_1 \ldots x_n)$ with $a_i$ and $x_i$ referring to the outcomes and measurements of the $i^{\mathrm{th}}$ party, where the no-signaling principle constrains $P$ non-trivially. Stronger restrictions can be imposed by requiring the devices to be compatible with quantum theory, or even to be classical. 
In this paper, we summarize a framework allowing to describe such \emph{Bell-type} scenarios in a very general way, and that extends naturally to contextuality scenarios. See~\cite{FLS12} for more details.

\section{Contextuality scenarios}
 We define a \emph{contextuality scenario} to be a hypergraph $H = (V,E)$ whose vertices $v \in V$ correspond to the events of the scenario, and the hyperedges $e = \{v_1, \cdots, v_k\} \in E$ are subsets of $V$ that should be thought of as the measurements of the scenario. We demand in addition that all the vertices belong to at least one hyperedge. 
 Such scenarios have been studied before in quantum logic where they are known as ``test spaces'' \cite{Wil09}.
A \emph{probabilistic model} on the scenario $H$ is then given by an assignment $p: V \rightarrow [0,1]$ of a probability $p(v)$ to each event $v \in V$ satisfying the normalization condition $\sum_{v \in e} p(v) =1$ for each measurement $e \in E$. Let us denote by $\G(H) \subseteq [0,1]^{|V|}$ the set of probabilistic models for the scenario $H$. By construction, this set is a polytope, the set of ``states'' on $H$ in the terminology of test spaces.
Let us note that this approach was inspired by the framework developed in \cite{CSW10}, but that a crucial difference between the two works is that we explicitly work with normalized probability distributions, instead of subnormalized ones.
 
\section{Bell-type scenarios}
 An important application of this framework concerns Bell-type scenarios where $n$ parties have access to $n$ distinct devices. For simplicity, we restrict ourselves to the scenario $\B_{n,m,k}$, where the $n$ devices all have $m$ different settings and $k$ possible outcomes. In particular, $\B_{2,2,2}$ will correspond to the usual CHSH scenario. We now describe the hypergraph $\B_{n,m,k}$. Its vertices are the $(mk)^n$ events of the form $(a_1 \ldots a_n | x_1\ldots x_n)$. The trickier part is to characterize the measurements of the scenario. Usually, one would define a measurement to be the set of events of the form $(\cdot |x_1 \ldots x_n)$ for fixed settings $x_i$. However, our framework includes additional measurements: a measurement in the scenario $\B_{n,m,k}$ corresponds to any strategy applied by the $n$ parties, \emph{possibly coming together}, where each of the parties measures their device. More specifically, a measurement of $\B_{n,m,k}$ is given by a temporal ordering of the parties: $i_1 \leq i_2 \leq \ldots \leq i_n$ where party $i_1$ first chooses a measurement setting $x_{i_1}$ and obtains an outcome $a_{i_1}$. Then, party $i_2$ chooses a setting $x_{i_2}$, possibly depending on $x_{i_1}$ and $a_{i_1}$, and obtains an outcome $a_{i_2}$. This process is repeated until the last party performs their measurement. Note that the strategy can be adaptive, meaning that party $i_k$ can choose their measurement setting to be a function of the previous outcomes $x_{i_1},\ldots, x_{i_{k-1}}$. In fact, in the most general kind of measurement allowed by our definition of Bell scenario, even the order of the parties may be adaptive in the sense that it may depend on previous outcomes. The scenario obtained this way is displayed on Fig.~\ref{CHSH} for the case of $\B_{2,2,2}$. Similarly general measurements have also been considered in~\cite{SB}.

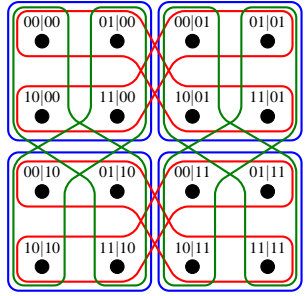
\begin{figure}
\definecolor{darkgreen}{rgb}{0,.5,0}
\begin{center}
\begin{tikzpicture}
\clip (0,.5) rectangle (5,5);
\foreach \x in {0,1} \foreach \y in {0,1} \foreach \a in {0,1} \foreach \b in {0,1}
{
	\node[draw,shape=circle,fill,scale=.5] (e) at (\b+2*\y+1,4-\a-2*\x) {} ;
	\node[above=0pt] at (e) {\tiny{$\a\b|\x\y$}} ;
}
\foreach \x in {0,2} \foreach \y in {0,2} \draw[rounded corners,thick,blue] (\x+0.55,\y+0.68) rectangle (\x+2.45,\y+2.52) ;
\foreach \y in {0,2} \draw[thick,red,rounded corners] (2.8,\y+0.8) -- (4.33,\y+0.8) -- (4.33,\y+1.4) -- (2.8,\y+1.4) -- (2.25,\y+2.4) -- (0.67,\y+2.4) -- (0.67,\y+1.8) -- (2.2,\y+1.8) -- cycle ;
\foreach \y in {0,2} \draw[thick,red,rounded corners] (2.8,\y+1.8) -- (4.33,\y+1.8) -- (4.33,\y+2.4) -- (2.75,\y+2.4) -- (2.2,\y+1.4) -- (0.67,\y+1.4) -- (0.67,\y+0.8) -- (2.2,\y+0.8) -- cycle ;
\foreach \x in {0,2} \draw[thick,darkgreen,rounded corners] (\x+0.62,2.4) -- (\x+0.62,0.75) -- (\x+1.3,0.75) -- (\x+1.3,2.1) -- (\x+2.38,2.8) -- (\x+2.38,4.45) -- (\x+1.65,4.45) -- (\x+1.65,3) -- cycle ;
\foreach \x in {0,2} \draw[thick,darkgreen,rounded corners] (5-\x-0.62,2.4) -- (5-\x-0.62,0.75) -- (5-\x-1.3,0.75) -- (5-\x-1.3,2.1) -- (5-\x-2.38,2.8) -- (5-\x-2.38,4.45) -- (5-\x-1.65,4.45) -- (5-\x-1.65,3) -- cycle ;
\end{tikzpicture}
\end{center}
\caption{The 16 events and 12 measurements of the CHSH scenario, $\B_{2,2,2}$}
\label{CHSH}
\end{figure}

The main advantage of defining $\B_{n,m,k}$ as above is that $\G(\B_{n,m,k})$ is exactly the standard no-signaling polytope $\NS(\B_{n,m,k})$, defined as correlations satisfying 
$$\sum_{a_{i+1} \ldots a_n} p(a_1 \ldots a_n | x_1 \ldots x_n) = p(a_1 \ldots a_i | x_1\ldots x_i)$$ 
for any splitting of the $n$ parties into two groups. This may seem surprising, since some hyperedges of $\B_{n,m,k}$ correspond to correlated measurements among the parties, where they communicate to each other. However, these measurements are exactly the ones which guarantee the no-signaling properties of the allowed probabilistic models. The proof that $\G(\B_{n,m,k})=\NS(\B_{n,m,k})$ is straightforward~\cite{FLS12} and here we only give the intuition in the case of $\B_{2,2,2}$. We wish to show that the normalization of the hyperedges (i.e.~that the total probability of the events in any measurement is 1) is equivalent to the no-signaling condition. A typical no-signaling condition for CHSH reads: $p(00|00)+ p(01|00) = p(00|01)+p(01|01)$ (corresponding to the first row on Fig.~\ref{CHSH}). This can be derived from the normalization of the measurement ``00'' consisting of events of the form $(\cdot|00)$ and implying that $p(00|00)+ p(01|00)=1-p(10|00)- p(11|00)$ and of the event $\{(10|00), (11|00), (00|01), (01|01)\}$ implying that $p(00|01)+p(01|01)$ is also equal to $1-p(10|00)- p(11|00)$. Hence, normalization implies no-signaling and the converse property can also be checked in the same fashion. 

\section{Classical and quantum models}
There are two natural restrictions that one might want to impose on the devices: either of a classical, or a quantum nature, leading respectively to the notions of \emph{classical} and \emph{quantum} probabilistic models. 
First, a \emph{deterministic} model on $H$ is a probabilistic model (hence satisfying normalization) such that $p(v) \in \{0, 1\}$ for all events $v\in V$. Then, classical models are given by convex combinations of deterministic models: $p(v) = \sum_\lambda q_\lambda p_\lambda(v)$, where $q_\lambda$ is a probability distribution, and every $p_\lambda$ is a deterministic model on $H$. The set of classical models on $H$ is denoted by $\C(H)$. If $H$ is a Bell-type scenario, then $\C(H)$ is the standard Bell polytope. If $H$ is a general contextuality scenario, classical models are those that can be explained by noncontextual hidden variables \cite{Fin82}. 

A quantum model $p$ on $H$ is a probabilistic model such that there exist a Hilbert space $\HS$, a normalized density matrix $\rho \in \B(\HS)$, and for each vertex $v \in V$, a projector $P_v$ such that $\sum_{v \in e} P_v = \mathbbm{1}_\HS$ for each measurement $e\in E$ that give rise to $p$ via the Born rule: $p(v) = \mathrm{tr} \, (\rho P_v)$, for each event $v$. The set of quantum models on $H$ is denoted by $\Q(H)$. Contrary to $\C(H)$ and $\G(H)$, the quantum set is usually not a polytope, and a recurring question in the literature is to find some ``natural principle'' that limits correlations observable in Nature to be those in the quantum set. Since $\Q(\B_{2,2,2}) \subsetneq \NS(\B_{2,2,2})$, it is clear that the no-signaling principle alone is not sufficient to restrict the correlations to be quantum. 

\section{The quantum set from a natural principle}
Several such candidate principles have been suggested and investigated: Information Causality \cite{PPK09}, Macroscopic Locality \cite{NW10}, the nontriviality of communication complexity \cite{vD05}, and more recently, Local Orthogonality \cite{FSA12}. The latter is particularly interesting in the sense that it is a genuinely multipartite principle, a necessary condition in order to recover the quantum set \cite{GWA11}. The framework we introduced above turns out to be remarkably well-suited for the study of Local Orthogonality (LO). The principle defines a notion of orthogonality between events of a Bell scenario, which in the language of this work is expressed as follows: two events $u$ and $v$ are orthogonal if they belong to a common measurement, i.e., there exists a measurement $e \in E$ such that $\{u, v\} \subseteq e$. Then, a set $C = \{v_1, \cdots, v_l\} \subseteq V$ of events is said to be \emph{orthogonal} if its elements are pairwise orthogonal. 
The principle finally says that the sum of the individual probabilities of a set of orthogonal events is at most one, $\sum_{v \in C} p(v) \leq 1$. 
The set obtained this way is a polytope denoted by $\LO^1(H)$. 
In our framework, the LO principle turns to be equivalent to the \emph{Consistent Exclusivity} principle for general contextuality scenarios \cite{Henson, cabello}, hence we will focus on the study of the latter. 
A natural strengthening of the CE principle assumes that if a given probabilistic model is ``physical'', then the same should apply to an arbitrary number $k$ of copies of this model. Then, Consistent Exclusivity should also be satisfied by the model corresponding to these $k$ copies. Copies of a scenario can be defined via the $k$-fold \emph{Foulis-Randall} product of the scenario $H$ with itself, $H^{\otimes k}$. The Foulis-Randall product \cite{FR81} is especially relevant in the context of Bell scenarios: scenarios with many parties can be obtained by taking the product of several single-party scenarios. In particular, $\B_{n,m,k} = \B_{1,m,k}^{\otimes n}$. Now, the strengthening of CE says that the resulting product probabilistic model $p^{\otimes k} \in \G(H^{\otimes k})$ should also satisfy CE. We denote by $\CE^k(H)$ the set of probabilistic models on $H$ such that $p^{\otimes k} \in \CE^1(H^{\otimes k})$. Note that for Bell scenarios $\CE^k(\B_{n,m,k}) = \LO^k(\B_{n,m,k})$, where the latter set was defined in \cite{FSA12}. In the limit of an arbitrary number of copies, this gives rise to the set $\CE^\infty(H)$, which would ideally match the set $\Q(H)$, were the CE principle sufficient to recover quantum correlations.
We note that another way to naturally strengthen CE would be to allow for wirings of boxes. However, it has been proved in \cite{FSA12} that these leave the set $\CE^\infty(H)$ invariant. 
It turns out that characterizing the set $\CE^\infty(H)$ of correlations satisfying the CE principle is quite challenging. While it is reasonably easy to verify that $\Q(H) \subseteq \CE^\infty(H) \subseteq\G(H)$, saying much more is difficult. 

Our framework, however, allows for a reformulation of $\CE^\infty(H)$ in terms of graph invariants. 
Introduce the \emph{non-orthogonality graph} $G = \mathrm{NO}(H)$ of the contextuality scenario $H$ to be the undirected graph with vertex set $V(H)$, and such that $\{u, v\}$ is an edge if $u$ and $v$ do not belong to a common measurement $e \in E(H)$.
Then, one can show~\cite{FLS12} that a probabilistic model $p$ belongs to $\CE^\infty(H)$ if and only if $\Theta(\mathrm{NO}(H),p) =1$ where $\Theta(G,p)$ refers to the Shannon capacity of the graph $G$ weighted by the distribution $p$.
This characterization can then be used to prove that $\CE^\infty(H)$ is in general strictly larger than $\Q(H)$~\footnote{In fact, a proof that the sets $\CE^\infty(H)$  and $\Q(H)$ are not equal was found by Miguel Navascu\'es before this formalism had been set up.}, and that there even exist contextuality scenarios for which $\CE^\infty(H)$ is not convex \cite{FLS12}.

\section{Hierarchies}
Another feature of our framework is that the various sets of correlations we mentioned can be approximated through some hierarchies of relaxations. Such hierarchies have been intensely studied in convex optimization (see Ref.~\cite{Lau09} for a recent review) and have been extended to noncommutative polynomial optimization~\cite{DLT08,PNA10}, including a characterization of quantum correlations in Bell scenarios~\cite{NPA07}. 

Let us first introduce the notion of moment matrix associated with a contextuality scenario $H=(V,E)$. A moment matrix of order $k$ associated with $H$ is a symmetric matrix $M_k$ whose rows and columns are indexed by \emph{words} of size at most $k$ written in the alphabet formed by $V$. More explicitly, if $V = \{v_1, \ldots, v_n\}$, the rows of the moment matrix will be indexed by: 
$$\emptyset, v_1, \ldots, v_n, v_1v_1, v_1v_2, \ldots, v_1v_n, \ldots, v_n v_n, v_1^3, \ldots, v_n^k,$$ where $v_i^k$ is the word obtained by concatenating $k$ times the letter $v_i$. Here, $\emptyset$ refers to the empty string, and we choose the normalization $M_k(\emptyset, \emptyset) = 1$. We denote by $V^*$ the set of strings of arbitrary size on $V$.
A matrix $M_k$ will be a \emph{certificate of order} $k$ for the probabilistic model $p$ on $H$ if it is positive semidefinite, $M_k \succeq 0$, and if $M_k(v,\emptyset)  = p(v)$ for every $v \in V$.

The matrices $M_k$ can display additional ``natural'' properties that we define now: Normalization, Orthogonality and Commutativity. A moment matrix is \emph{normalized} with respect to the contextuality scenario $H=(V,E)$ if for every two strings $\vec{v}, \vec{w} \in V^*$, and every hyperedge $e \in E$, the following condition holds:
\begin{align}
\sum_{u\in e} M(\vec{v} u, \vec{w}) = M(\vec{v}, \vec{w}).  \tag{Normalization}
\end{align}
A matrix is \emph{orthogonal} with respect to $H$ if for every $e\in E$, and $\vec{v}, \vec{w} \in V^*$, the fact that $v, w \in e$ $(v \ne w)$ implies that
\begin{align}
M(\vec{v} v, \vec{w} w) = 0 \quad \forall \vec{v}, \vec{w} \in V^*.  \tag{Orthogonality}
\end{align}
Finally, a matrix is \emph{commutative} if for any two strings $\vec{v}, \vec{w} \in V^*$, and every permutation $\pi$ of size $|\vec{v}|$, 
\begin{align}
M(\pi(\vec{v}),\vec{w}) = M(\vec{v}, \vec{w}),  \tag{Commutativity}
\end{align}
where $\pi(\vec{v})$ is the string obtained by permuting the letters of $\vec{v}$ with the permutation $\pi$.

We are now in a position to define sets of models for which there exist certificates satisfying some of these properties. These sets actually form hierarchies of sets $\left(\mathcal{S}_k\right)_{k\geq 1}$, such that $\mathcal{S}_{k} \subseteq \mathcal{S}_{k-1}$ is the set of probabilistic models with a certificate of order $k$. The hierarchies we will introduce admit limits that we denote by $\mathcal{S}_\infty := \bigcap_{k\geq 0} \mathcal{S}_k$. 
Let us define three hierarchies of sets $\G_k, \Q_k$ and $\C_k$ as follows. 
A probabilistic model $p$ on $H$ belongs to $\G_k(H)$ if there exists a certificate of order $k$ for $p$ satisfying Normalization; it belongs to $\Q_k(H)$, if there exists a certificate of order $k$ satisfying Normalization and Orthogonality; and it belongs to $\C_k(H)$ if there exists a certificate of order $k$ satisfying Normalization, Orthogonality and Commutativity. 
Our results show that these hierarchies converge to the expected sets.
\begin{theor}[Convergence of the hierarchies] \nonumber
For every contextuality scenario $H=(V,E)$, 
\begin{align}
\G_\infty(H) &= \G_1(H) = \G(H),\\
\Q_\infty (H)&= \Q(H),\\
\C_\infty (H)&= \C_{|V|}(H) = \C(H).
\end{align}
\end{theor}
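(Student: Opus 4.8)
The plan is to treat each of the three hierarchies by splitting its claimed identity into an \emph{easy inclusion} --- every physical model admits certificates of all orders --- and a \emph{hard inclusion} --- a certificate at a finite or infinite level can be ``dequantized'' back into a physical model. Throughout, note that every level of every hierarchy is by definition a set of genuine probabilistic models, so $\mathcal{S}_\infty(H)\subseteq\mathcal{S}_k(H)\subseteq\G(H)$ holds automatically; the content lies in the reverse directions and in locating the level at which each hierarchy stabilizes.

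For the general hierarchy I would prove directly that $\G(H)\subseteq\G_\infty(H)$, which forces all the inclusions in $\G(H)\subseteq\G_\infty(H)\subseteq\G_1(H)\subseteq\G(H)$ to be equalities. Given $p\in\G(H)$, define a vector $\xi$ indexed by words $\vec v=v_1\cdots v_\ell$ of length at most $k$ by $\xi_{\vec v}=\prod_i p(v_i)$ (with $\xi_\emptyset=1$) and set $M_k=\xi\xi^{\top}$. This is manifestly positive semidefinite, has $M_k(\emptyset,\emptyset)=1$ and $M_k(v,\emptyset)=p(v)$, and Normalization holds since $\sum_{u\in e}\xi_{\vec v u}=\xi_{\vec v}\sum_{u\in e}p(u)=\xi_{\vec v}$. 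Hence $p$ has a normalized certificate of every order, giving $\G_\infty(H)=\G_1(H)=\G(H)$. This certificate fails Orthogonality, as it must.

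For the quantum hierarchy the easy inclusion $\Q(H)\subseteq\Q_\infty(H)$ comes from the GNS-type moment matrix of a quantum model $(\HS,\rho,\{P_v\})$: with $W_{\vec v}$ the ordered product of the $P_{v_i}$, set $M(\vec v,\vec w)=\mathrm{tr}\!\big(\rho\,W_{\vec v}^\dagger W_{\vec w}\big)$; positivity is automatic, Normalization follows from $\sum_{u\in e}P_u=\mathbbm 1$, and Orthogonality follows from $P_vP_w=0$ for distinct $v,w\in e$ (itself forced by $\sum_{u\in e}P_u=\mathbbm 1$ with $P_u^2=P_u$). The hard inclusion $\Q_\infty(H)\subseteq\Q(H)$ is the heart of the theorem and where I expect the main obstacle to lie. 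Here I would, first, fix $p\in\Q_\infty(H)$, choose a certificate $M^{(k)}$ at each order, check that the relations force $0\le M(\vec v,\vec v)\le 1$ and hence $|M(\vec v,\vec w)|\le 1$ by Cauchy--Schwarz, and run a diagonal/Tychonoff argument to extract a single infinite positive semidefinite matrix $M$ on $V^\ast$ satisfying Normalization and Orthogonality and reproducing $p$; and second, perform a GNS construction: equip the free vector space on $V^\ast$ with the form $\langle\vec v,\vec w\rangle=M(\vec v,\vec w)$, quotient by its kernel, and complete to a Hilbert space $\HS$. Left concatenation by a letter $u$ should descend to a bounded operator $\pi(P_u)$, the relations guaranteeing that each $\pi(P_u)$ is a projector, that $\sum_{u\in e}\pi(P_u)=\mathbbm 1$, and that $\pi(P_v)\pi(P_w)=0$ for distinct $v,w\in e$; with $\rho$ the rank-one state on the image of $\emptyset$, the Born rule returns $p$. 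Establishing the uniform boundedness needed for the limit, and checking that concatenation is well defined modulo the kernel, are the delicate points.

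For the classical hierarchy the easy inclusion $\C(H)\subseteq\C_\infty(H)$ proceeds by writing $p=\sum_\lambda q_\lambda p_\lambda$ with each $p_\lambda$ deterministic, taking the rank-one matrices $M^\lambda=\eta^\lambda(\eta^\lambda)^{\top}$ with $\eta^\lambda_{\vec v}=\prod_i p_\lambda(v_i)\in\{0,1\}$, and averaging $M=\sum_\lambda q_\lambda M^\lambda$. Each $M^\lambda$ satisfies all three properties (Commutativity and Orthogonality because the $p_\lambda(v)$ are commuting $\{0,1\}$ scalars with exactly one $1$ per edge), and since the three properties are linear in the matrix they survive the convex combination, yielding certificates of every order. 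The substantial inclusion is $\C_{|V|}(H)\subseteq\C(H)$. Combining Normalization with Orthogonality gives the idempotency relation $X_v^2=X_v$ (multiply $\sum_{w\in e}X_w=1$ by $X_v$), so under Commutativity the moment data is a positive normalized functional on the commutative algebra generated by the $X_v$ modulo $\sum_{v\in e}X_v=1$, $X_vX_w=0$ $(v\neq w\in e)$ and $X_v^2=X_v$. This is the algebra of functions on the finite set of $\{0,1\}$-assignments selecting exactly one vertex per edge --- precisely the deterministic models --- and every monomial reduces to a square-free one, which a word of length at most $|V|$ already represents. Hence an order-$|V|$ certificate fixes the functional on the entire finite-dimensional algebra, and positive semidefiniteness turns it into integration against a probability measure on deterministic models, i.e.\ a classical model, giving $\C_\infty(H)=\C_{|V|}(H)=\C(H)$.
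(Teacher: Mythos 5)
Your proposal is correct and follows essentially the same architecture as the paper's proof: the identical rank-one product certificate $M(\vec{v},\vec{w})=\prod_i p(v_i)\prod_j p(w_j)$ for the $\G$ hierarchy, the same quantum moment matrix $\mathrm{tr}\,(\rho\, W_{\vec{v}}^\dagger W_{\vec{w}})$ for $\Q(H)\subseteq\Q_\infty(H)$, and a GNS construction from the limiting moment matrix for the converse. The one genuinely different piece is your classical converse: the paper first notes $\C_\infty(H)\subseteq\Q_\infty(H)=\Q(H)$, applies the quantum GNS construction to obtain \emph{commuting} projectors, simultaneously diagonalizes them in a basis $\Lambda$, and replaces $\rho$ by its diagonal $\tilde{\rho}$ so that $p(v)=\sum_\lambda q_\lambda\langle\lambda|P_v|\lambda\rangle$ exhibits $p$ as a mixture of deterministic models; you instead argue directly that the moment data defines a state on the finite-dimensional commutative algebra generated by idempotents $X_v$ subject to $\sum_{v\in e}X_v=1$ and $X_vX_w=0$, whose characters are exactly the deterministic models, so positivity yields a probability measure on them. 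The two arguments are Gelfand-dual to one another; yours avoids the detour through the quantum case and makes the stabilization at level $|V|$ via square-free words more transparent, while the paper's reuses the GNS machinery already built for $\Q_\infty$. You are also more explicit than the paper on two technical points it defers to the full version \cite{FLS12}: the diagonal/compactness extraction of a single infinite certificate from order-$k$ certificates (your bound $0\le M(\vec{v},\vec{v})\le 1$ is right, since Orthogonality kills the cross terms $M(\vec{v}u,\vec{v}u')$ in the doubly-applied Normalization identity), and the verification that left concatenation descends to projectors on the GNS quotient. Your easy classical inclusion by averaging rank-one $\{0,1\}$-valued certificates is the same construction as the paper's diagonal $(\rho, P_v)$ certificate, merely stated without the Hilbert-space dressing.
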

\begin{proof}
The fact that $\G_1(H)=\G(H)$ holds by definition. Moreover, if $p\in \G(H)$, then one can construct an explicit certificate of any order by fixing: $M(v_1 \ldots v_n, w_1 \ldots w_m) := \prod_{i=1}^n p(v_i) \prod_{j=1}^m p(w_j)$, which is of rank 1 and clearly satisfies Normalization.

Given a quantum model $p \in \Q(H)$, together with its associated Hilbert space $\HS$, density matrix $\rho \in \B(\HS)$, and projectors $P_v$ for each $v\in V$, one can define $M(v_1 \ldots v_n, w_1 \ldots w_m) := \mathrm{tr}\,\left(\rho \prod_{i=1}^n P_{v_i} \prod_{j=m}^1 P_{w_j}\right)$. It is straightforward to check that this (infinite) matrix is positive semidefinite and satisfies both Normalization and Orthogonality. Alternatively, one needs to show that if such a certificate of order $k$ can be associated with $p$ for any $k \geq 0$, then it it possible to find a quantum model for $p$. This is done via the Gelfand-Naimark-Segal (GNS) construction by interpreting the infinite matrix $M$ as a $*$-algebraic state through the assignment $\phi(P_{v_1} \ldots P_{v_n}) = M(v_1 \ldots v_n, \emptyset)$ on the $*$-algebra with generators $\{P_v, v \in V\}$,  and relations $P_v = P_v^2 = P_v^*$ and $\sum_{v\in e} P_v = \mathbbm{1}$ for all $e\in E$. The GNS constructions then turns it into a quantum model satisfying $p(v) = \phi(P_v)$ for all $v\in V$. Full details of the proof are presented in Ref.~\cite{FLS12}.

Consider finally a model $p \in \C_\infty(H)$. By definition, if it is not empty, $\C_\infty(H)$ is contained in $\Q_\infty(H) = \Q(H)$. Because of the commutativity property and the fact that repeating a letter does not change the value of the entry (itself a consequence of Normalization and Orthogonality), it is clear that the sequence $\left(\C_k(H)\right)_{k\geq 1}$ converges after at most $|V|$ steps (since no ``new'' word can be formed with more letters). The projectors $P_v$ obtained from the GNS construction commute and can all be diagonalized in the same orthonormal basis $\Lambda = \left\{ |\lambda\rangle \right\}$. Expressing the associated density matrix $\rho \in \B(\HS)$ in the same basis, and denoting by $\tilde{\rho}$ the diagonal density matrix with the same diagonal as $\rho$, it is clear that $\tilde{\rho}$ gives rise to the same model as $\rho$. Writing $\tilde{\rho} = \sum_{\lambda \in \Lambda} q_\lambda |\lambda\rangle \langle \lambda|$, one obtains that for all $v\in V$, $p(v) = \sum_{\lambda \in \Lambda} q_\lambda \langle \lambda |P_v|\lambda\rangle$, where the distribution $\left(\langle \lambda|P_v| \lambda \rangle\right)_{\lambda \in \Lambda}$ corresponds to a deterministic model on $H$. Hence, $p(v) \in \C(H)$.

Conversely, for any classical model $p \in \C(H)$, there exist a probability distribution $(q_\lambda)_{\lambda \in \Lambda}$ and deterministic models $p_\lambda$ on $H$ for each $\lambda \in \Lambda$ such that $p(v) = \sum_{\lambda \in \Lambda} q_\lambda p_\lambda(v)$. Define a Hilbert space with basis $\{|\lambda\rangle \: : \: \lambda \in \Lambda\}$, projectors $P_v = \sum_{\lambda \in \Lambda} p_\lambda(v) |\lambda\rangle \langle \lambda|$ for all $v\in V$, and the diagonal density matrix $\rho = \mathrm{diag} (q_{\lambda_1}, q_{\lambda_2}, \ldots)$. It is straightforward to check that the matrix $M$ defined by $M(v_1 \ldots v_n, w_1 \ldots w_m) := \mathrm{tr}\,\left(\rho \prod_{i=1}^n P_{v_i} \prod_{j=1}^m P_{w_j}\right)$ is a certificate of any order satisfying Normalization, Orthogonality and Commutativity.
\end{proof}

The hierarchies $(\G_k)_{k\geq 1}$ and $(\C_k)_{k\geq 1}$ both converge after a finite number of steps, and it is natural to ask whether the same holds for $(\Q_k)_{k\geq 1}$. It is in fact an open question related to difficult problems in the theory of $C^*$-algebras whether there exist contextuality scenarios $H$ for which the hierarchy needs infinitely many steps to converge (see Section 8.3 of \cite{FLS12} for details).

\section{Link between $\CE^\infty(H)$ and the quantum set}
In the same way as $\CE^\infty(H)$ can be characterized via the Shannon capacity of the non-orthogonality graph $\mathrm{NO}(H)$, weighted by the distribution $p$, the first level of the quantum hierarchy, $\Q_1(H)$, can be characterized by the Lov\'asz number $\vartheta$ of $\mathrm{NO}(H)$, weighted by $p$. More precisely, a probabilistic model $p$ on the contextuality scenario $H$, belongs to $\Q_1(H)$ if and only if $\vartheta(\mathrm{NO}(H),p) =1$.

For every graph $G$, and any choice of weight $p$ for the vertices of $G$, it is known that $\Theta(G,p) \leq \vartheta(G,p)$, which immediately implies that for every contextuality scenario, $\Q_1(H) \subseteq \CE^\infty(H)$. This proves that the Local Orthogonality principle is not sufficient to recover the set of quantum correlations for arbitrary contextuality scenarios, since in general $\Q(H)\subsetneq\Q_1(H)$.

A possible strengthening of the CE principle is inspired by a recent paper \cite{Yan13}. One may take it as part of a principle to assume for granted that quantum correlations are physical \emph{for any contextuality scenario $H$} and then only look for a postulate that excludes the existence of probabilistic models outside $\Q(H)$. Using this idea, it is possible to show that the set of probabilistic models satisfying this extension of CE is no longer $\CE^\infty(H)$, but rather $\Q_1(H)$ (which is not equal to $\Q(H)$ in general). However, asking that a property like the existence of quantum models holds for any contextuality scenario may not be as natural as asking that it holds for Bell-type scenarios only.

To summarize, we have introduced a new framework for contextuality and nonlocality, that allows to treat Bell scenarios as a particular case of contextuality scenarios. This approach significantly refines that of~\cite{CSW10}, since it includes the normalization of the probabilistic models; only this allows us to recover Bell scenarios as special cases. Moreover, the description of Bell scenarios is instrinsically related to the existence of correlated measurements among the parties, and these are naturally described in terms of the Foulis-Randall product of contextuality scenarios. This framework is well-suited for studing correlations based on orthogonal events, such as those characterized from the Consistent Exclusivity principle or the Local Orthogonality principle. In particular, we defined the non-orthogonality graph of a contextuality scenario, and used it to characterize whether a probabilistic model belongs to $\CE^\infty(H)$ in terms of a graph invariant. We further defined a hierarchy of relaxations that converge to the quantum set, and used it to prove that in general $\Q(H) \subsetneq \CE^\infty(H)$. We believe that there exist other connections between this framework and other formalisms, that may be of great use for understanding the set of quantum models.

\section*{Acknowledgments}
We thank Antonio Ac{\'i}n and Miguel Navascu\'es for comments and discussion.
Research at Perimeter Institute is supported by the Government of Canada
through Industry Canada and by the Province of Ontario through the Ministry
of Economic Development and Innovation. This work has been supported by a grant from the John Templeton Foundation.
A.B.S. was supported by the ERC SG PERCENT and by the Spanish projects FIS2010-14830, DIQIP and spanish FPU:AP2009-1174 PhD grant.

\nocite{*}
\bibliographystyle{eptcs}
\bibliography{FLS}
\end{document}